\begin{document}

\title{Adaptive Flip Graph Algorithm for Matrix Multiplication}

\author{Yamato Arai}
\email{arai-yamato206@g.ecc.u-tokyo.ac.jp}
\affiliation{%
  \institution{Department of Integrated Sciences, 
  University of Tokyo}
  \streetaddress{P.O. Box 1212}
  \city{Tokyo}
  \state{Tokyo}
  \country{Japan}
  \postcode{43017-6221}
}

\author{Yuma Ichikawa}
\email{ichikawa-yuma1@g.ecc.u-tokyo.ac.jp}
\affiliation{
  \institution{Graduate School of Arts and Sciences, 
  University of Tokyo}
  \streetaddress{P.O. Box 1212}
  \city{Tokyo}
  \state{Tokyo}
  \country{Japan}
  \postcode{43017-6221}
}
\email{ichikawa.yuma@fujitsu.com}
\affiliation{
  \institution{Fujitsu Laboratories}
  \streetaddress{P.O. Box 1212}
  \city{Tokyo}
  \state{Tokyo}
  \country{Japan}
  \postcode{43017-6221}
}

\author{Koji Hukushima}
\email{k-hukushima@g.ecc.u-tokyo.ac.jp}
\affiliation{%
  \institution{Graduate School of Arts and Sciences,
  University of Tokyo}
  \streetaddress{P.O. Box 1212}
  \city{Tokyo}
  \state{Tokyo}
  \country{Japan}
  \postcode{43017-6221}
}

\renewcommand{\shortauthors}{Arai et al.}

\begin{abstract}
    This study proposes the ``adaptive flip graph algorithm'', which combines adaptive searches with the flip graph algorithm for finding fast and efficient methods for matrix multiplication.
    The adaptive flip graph algorithm addresses the inherent limitations of exploration and inefficient search encountered in the original flip graph algorithm, particularly when dealing with large matrix multiplication.   
    For the limitation of exploration, the proposed algorithm adaptively transitions over the flip graph, introducing a flexibility that does not strictly reduce the number of multiplications. 
    Concerning the issue of inefficient search in large instances, the proposed algorithm adaptively constraints the search range instead of relying on a completely random search, facilitating more effective exploration.
    In particular, the formal proof is provided that the introduction of plus transitions in the proposed algorithm ensures the connectivity of any node in the flip graph, representing a method of matrix multiplication.  
    Numerical experimental results demonstrate the effectiveness of the adaptive flip graph algorithm, showing a reduction in the number of multiplications for a $4\times 5$ matrix multiplied by a $5\times 5$ matrix from $76$ to $73$, and that from $95$ to $94$ for a $5 \times 5$ matrix multiplied by another $5\times 5$ matrix. These results are obtained in characteristic two.
\end{abstract}

\begin{CCSXML}
<ccs2012>
 <concept>
  <concept_id>00000000.0000000.0000000</concept_id>
  <concept_desc>Do Not Use This Code, Generate the Correct Terms for Your Paper</concept_desc>
  <concept_significance>500</concept_significance>
 </concept>
 <concept>
  <concept_id>00000000.00000000.00000000</concept_id>
  <concept_desc>Do Not Use This Code, Generate the Correct Terms for Your Paper</concept_desc>
  <concept_significance>300</concept_significance>
 </concept>
 <concept>
  <concept_id>00000000.00000000.00000000</concept_id>
  <concept_desc>Do Not Use This Code, Generate the Correct Terms for Your Paper</concept_desc>
  <concept_significance>100</concept_significance>
 </concept>
 <concept>
  <concept_id>00000000.00000000.00000000</concept_id>
  <concept_desc>Do Not Use This Code, Generate the Correct Terms for Your Paper</concept_desc>
  <concept_significance>100</concept_significance>
 </concept>
</ccs2012>
\end{CCSXML}

\ccsdesc[500]{Computing methodologies~Algebraic algorithm}

\keywords{Bilinear complexity, Strassen's algorithm, Tensor rank}


\maketitle

\section{Introduction}
Efficient matrix multiplication is crucial for accelerating various computational processes, influencing overall computational speed significantly.  
Therefore, numerous studies have been undertaken to reduce the computational cost from various perspectives.
The groundbreaking work of Strassen demonstrated that two $2 \times 2$ matrices can be multiplied with only $7$ multiplications \cite{strassen1969gaussian}, leading to extensive studies on the complexity of matrix multiplication and the discovery of optimal multiplications for matrices beyond $2 \times 2$. 
For the complexity of matrix multiplication, recent developments continue to aim at finding the upper bound on the matrix multiplication exponent $\omega$. 
The current best-known bound is $\omega < 2.371552$ \cite{williams2023new}, improving the previous best bound of $\omega < 2.371866$ \cite{duan2022faster}. 
For optimal multiplication methods, notable progress has been made, reducing the multiplication of two $3 \times 3$ matrices to $23$ \cite{laderman1976noncommutative}, and the multiplication of two $4 \times 4$ matrices have been reduced to $49$ using recursively applying Strassen's algorithm \cite{strassen1969gaussian}.
Recently, a reinforcement learning approach has further refined the multiplication of two $4 \times 4$ matrices to $47$ \cite{Fawzi2022} and improved the best-known result for two $5 \times 5$ matrices from $98$ to $96$ multiplications.
Subsequently, focusing on the symmetry in matrix multiplications, a topology called flip graph has been introduced \cite{kauers2022flip}. 
By randomly repeating the local transition of vertices over the flip graph, the best-known result for two $5\times 5$ matrices has been improved from $96$ to $95$ multiplications \cite{kauers2022fbhhrbnrssshkalgorithm, kauers2022flip}.

In this study, we first identify two practical issues confronting the flip graph algorithm. 
The first issue is that the local transition can encounter troublesome ``states''  which limit further exploration and prevent updating the best number of multiplications. 
The second issue is that a random search on the flip graph becomes impractical as the matrix size increases within a reasonable computation time.  In fact, without setting the results obtained by the reinforcement learning to the initial condition, it would take too long to search from the initial condition and the best number of multiplications would not be reached \cite{kauers2022flip}. 
To address these issues, this study proposes an ``adaptive flip graph algorithm''. 
For the first issue, when encountering troublesome states, the algorithm adaptively transitions vertices in a direction that does not necessarily reduce the number of matrix multiplications, facilitating an escape from the states. 
We demonstrated that the connectivity of the flip graph is ensured by employing this transition. This proof confirms that all vertices in the flip graph are accessible, guaranteeing comprehensive exploration.
For the second issue,  instead of conducting a completely random search, 
the algorithm adaptively constraints the search range, enabling faster exploration.
Experiments show that the adaptive flip graph algorithm reduces the number of multiplication for a $4\times 5$ matrix and a $5\times 5$ matrix from $76$ to $73$ and the number of multiplication of two $5 \times 5$ matrices from $95$ to $94$, in characteristic 2. 

\section{Problem Formulation of Matrix Multiplication}\label{sec:problem-formulation}

Consider a field denoted as $K$, a $K$-Algebra denoted by $R$ and let $A \in R^{n \times m}$, $B \in R^{m \times p}$, respectively. The matrix multiplication $C=A B$ can be fully represented by the following tensor. 
\begin{definition}[Matrix Multiplication Tensor]
    For $n, m, p \in \mathbb{N}$, we define matrix multiplication tensor such that
    \begin{equation}
    \label{eq:tensorform}
        \mathcal{M}_{n, m, p} = \sum_{i,j,k=1}^{n, m, p} a_{i,j} \otimes b_{j, k} \otimes c_{k, i} \in K^{n \times m} \otimes K^{m\times p} \otimes K^{p \times n}.
    \end{equation}
    where $a_{u,v}$, $b_{u, v}$, $c_{u, v}$ refer to matrices of the respective format that have a $1$ at position $(u, v)$ and zeros in all other positions.    
\end{definition}

The number of multiplications to perform a matrix multiplication is connected to the rank of the matrix multiplication tensor. 
The rank of a tensor is defined as the smallest number $r$ such that $\mathcal{M}$ can be represented as a sum of $r$ rank-one tensors of the form $\mathcal{M} \in K^{n \times m} \otimes K^{m\times p} \otimes K^{p \times n}$.
We show the $2 \times 2$ matrix multiplication as a matrix multiplication tensor.

\begin{example}[Standard Classical Matrix Multiplication Algorithm for two $2\times 2$ matrices]
\label{ex:standard}
\begin{align*}
    \mathcal{M}_{2, 2,2} &= a_{1,1} \otimes b_{1, 1} \otimes c_{1, 1} 
    + a_{1,2} \otimes b_{2, 1} \otimes c_{1, 1} \\
    &+ a_{1,1} \otimes b_{1, 2} \otimes c_{2, 1} + a_{1,2} \otimes b_{2, 2} \otimes c_{2, 1} \\
    &+ a_{2,1} \otimes b_{1, 1} \otimes c_{1, 2} + a_{2,2} \otimes b_{2, 1} \otimes c_{1, 2} \\
    &+ a_{2,1} \otimes b_{1, 2} \otimes c_{2, 2} + a_{2,2} \otimes b_{2, 2} \otimes c_{2, 2}.
\end{align*}
In a similar way, for a matrix multiplication of an $n\times m$ matrix and an $m\times p$ matrix, $\mathcal{M}_{n, m, p}$  has a representation of the sum of $n\times m\times p$ rank-one tensors, which is referred to as the standard classical matrix multiplication algorithm. 
\end{example}

\begin{example}[Strassen's Algorithm]
\label{ex:Strassen}
The sum of $8$ linearly independent rank-one tensors need not necessarily have tensor rank $8$. Strassen shows that $\mathcal{M}_{2,2,2}$, defined as a sum of $8$ linearly independent rank-one tensors, can be represented as a following sum of $7$ linearly independent rank-one tensors:
    \begin{align*}
        \mathcal{M}_{2, 2,2} &= (a_{1, 1} + a_{2, 2}) \otimes (b_{1, 1} + b_{2,2}) \otimes (c_{1, 1} + c_{2,2}) \\
        &+ (a_{2, 1} + a_{2,2}) \otimes (b_{1, 1}) \otimes (c_{1, 2} - c_{2, 2}) \\
        &+(a_{1, 1}) \otimes (b_{1, 2} - b_{2, 2}) \otimes (c_{2, 1} + c_{2, 2}) \\
        &+(a_{2, 2}) \otimes (b_{2, 1}-b_{1, 1}) \otimes (c_{1, 1} + c_{1, 2}) \\
        &+(a_{1, 1} + a_{1, 2}) \otimes (b_{2, 2}) \otimes (c_{2, 1} - c_{1, 1}) \\
        &+ (a_{2, 1}-a_{1, 1}) \otimes (b_{1, 1} + b_{1, 2}) \otimes (c_{2, 2}) \\
        &+(a_{1, 2} - a_{2, 2}) \otimes (b_{2, 1} + b_{2, 2}) \otimes (c_{1, 1}).
    \end{align*}
\end{example}

In the subsequent discussion, we introduce the following definition for these rank-one tensors.
\begin{definition}[Matrix Multiplication Scheme]
A finite multiset $S$ of rank-one tensors, whose sum is $\mathcal{M}_{n, m, p}$, is referred to as an $(n, m, p)$-matrix multiplication scheme and $|S|$ is the rank of the scheme.
\end{definition}

\subsection{Flip Graph Algorithm}
A graph, known as the flip graph, has been introduced to represent the transformation relationships between matrix multiplication schemes. 
A search method on this graph has been proposed to reduce the rank of the scheme \cite{kauers2022flip}. This algorithm, which is the basis of the one employed in our study, is reviewed in this subsection. 

\begin{definition}[Flip]
    \label{def:flip}
    For two rank-one tensors satisfying $\alpha^{(i)} = \alpha^{(j)}$ in a scheme $S$, the following transformation is introduced: 
    \begin{align*}
        &\alpha^{(i)} \otimes \beta^{(i)} \otimes \gamma^{(i)} + \alpha^{(i)} \otimes \beta^{(j)} \otimes \gamma^{(j)}\\
        &\rightarrow \alpha^{(i)} \otimes (\beta^{(i)} + \beta^{(j)}) \otimes \gamma^{(i)} + \alpha^{(i)} \otimes \beta^{(j)} \otimes (\gamma^{(j)} - \gamma^{(i)}).
    \end{align*}
The scheme $S'$ obtained by this transformation is referred to as the \textit{Flip} of $S$.
This flip transformation also applies to any permutation of $\alpha$, $\beta$ and $\gamma$.
\end{definition}

\begin{definition}[Reduction]
\label{def:redcution}
For a non-empty set $I \subseteq \{1,2, \ldots, |S|\}$ in a scheme $S$, another scheme $S'$ with a rank\\ $|I| - \text{rank}\left(\sum_{i \in I}  \beta^{(r)} \otimes \gamma^{(r)}\right)$ smaller than the rank of S can be constructed using Gaussian elimination if the following conditions hold:
    \begin{enumerate}
        \item $\dim\langle\alpha^{(i)}\rangle_{i \in I} = 1$, 
        \item $\text{rank}\left(\sum_{i \in I}  \beta^{(r)} \otimes \gamma^{(r)}\right) < |I|$. 
    \end{enumerate}
The scheme $S'$ obtained by this transformation is referred to as the \textit{Reduction} of $S$.
This reduction transformation also applies to any permutation of $\alpha$, $\beta$ and $\gamma$.
\end{definition}

\begin{definition}[Flip Graph]
    Let $n, m, p \in \mathbb{N}$, and let $V$ be the set of all schemes for $(n, m, p)$-matrix multiplication. Define the following sets of edges: 
    \begin{enumerate}
        \item[] $E_1$:\{(S, S') $|$ $S'$ is a flip of $S$\},
        \item[] $E_2$:\{(S, S') $|$ $S'$ is a reduction $S$\}. 
    \end{enumerate}
The graph $G(V, E_1 \bigcup E_2)$ is then referred to as the $(n, m, p)$-flip graph. 
\end{definition}

The search method for a small rank scheme based on flip graphs is a random search, starting from an initial scheme, randomly choosing an edge in $E_1$ and seeking a vertex connected to $E_2$. The pseudo-codes of the conventional flip graph algorithm are shown in Algorithms~\ref{alg:random-search} and \ref{alg:flip-graph-algorithm}. 

\begin{algorithm}[H]
    \caption{Random Search}
    \label{alg:random-search}
    \begin{algorithmic}[1]
    \Function{random\_search}{$S$}
    \If{$E_{1} \neq \emptyset$}
        \State $S \gets$ Flip of $S$ by Definition \ref{def:flip}
    \Else
        \State \Return $S$
    \EndIf
    \If{$E_{2} \neq \emptyset$}
        \State $S \gets$ Reduction of $S$ by Definition \ref{def:redcution}
    \EndIf
    \State \Return $S$
    \EndFunction
    \end{algorithmic}
\end{algorithm}

\begin{algorithm}[H]
    \caption{Flip Graph Algorithm}
    \label{alg:flip-graph-algorithm}
    \begin{algorithmic}[1]
    \State \textbf{Input:} Initial scheme $S_{0}$ and total iteration $T$
    \State \textbf{Output:} $S_{T}$
    \For{$t = 1$ to $T$}
        \State $S_{t}\gets$ \Call{random\_search}{$S_{t-1}$} from Algorithm \ref{alg:random-search}.
    \EndFor
    \end{algorithmic}
\end{algorithm}

\section{Limitation and Practical Issues of Flip Graph Algorithm}\label{sec:issue-flip-algorithm}
In our exploration using a random search on the flip graph, we have encountered cases where finding efficient larger matrix multiplication schemes makes it challenging to find vertices on the flip graph that can reduce the rank of the scheme. These cases can be divided into two main types: 
\subsection{Non-Reduction States in Principle}
Numerical experiments have revealed a phenomenon where, for the $(4,4,4)$-matrix multiplication scheme, the rank of the scheme does not decrease during the search, even with increasing the number of iterations. This phenomenon is considered to be due to the absence of vertices that can reduce the rank among the vertices accessible by the flip algorithm. Such a state, called a ``non-reduction state", becomes more common as the matrix size increases.  

\subsection{Practical Non-Reducibility for Large Matrix Multiplication}
In random search experiments for the $(5,5,5)$-matrix multiplication scheme, we observed another phenomenon: the rank does not reduce at all from the initial scheme, which is the standard classical matrix multiplication algorithm as shown in Example~\ref{ex:standard}.
Notably, this phenomenon is not observed for a $4\times 4$ matrix or lower and only appears for order $5$ or larger matrices. This observation implies that as the size of the tensor increases, reaching reducible vertices by the random search becomes more challenging.  

The following argument shows that this issue is distinct from the non-reduction states described above. 
Suppose we take the standard classical matrix multiplication algorithm as in Example~\ref{ex:standard} for the initial scheme.
If the matrix is decomposed into submatrices, Strassen's algorithm as in Example~\ref{ex:Strassen} can eventually be applied to a $2\times 2$ submatrix, meaning that there is a reducible vertex in the neighborhood of the initial scheme on the flip graph. Thus, this consideration shows that this initial scheme is not in non-reduction states. 
Even then, it is challenging in a random search to start the rank reduction from the initial scheme. 
\section{Adaptive Flip Graph Algorithm}
Considering the limitations clarified, the next interesting study is the development of an adaptive flip strategy for efficient search on the flip graph for possible rank reduction. This adaptive approach could involve dynamically adjusting the search criteria based on the state during the search. Such an extension could improve the effectiveness of the flip graph algorithm and potentially reduce the challenges observed in our study. Here, we propose two adaptive approaches as solutions to each of the two limitations outlined above. 

\subsection{Plus Transition}
To address the former issue, we propose the plus transition, which introduces a transition that increases the rank of the scheme. This provides a strategy to escape from the non-reduction state described above, allowing the exploration of schemes that are inaccessible by the flip graph algorithm. The procedure for increasing the rank is defined as a "Plus" transformation.   

\begin{definition}[Plus]
    \label{def:plus}
    For a scheme $S$ the $\alpha^{(i)} \neq \alpha^{(j)}, \beta^{(i)} \neq \beta^{(j)}, \gamma^{(i)} \neq \gamma^{(j)}$ are satisfied, the ``Plus" transformation from $S$ to $S'$ is defined as follows: 
    \begin{align*}
        &\alpha^{(i)} \otimes \beta^{(i)} \otimes \gamma^{(i)} + \alpha^{(j)} \otimes \beta^{(j)} \otimes \gamma^{(j)}\\
        &\rightarrow \alpha^{(i)} \otimes \beta^{(i)} \otimes \gamma^{(i)} + \alpha^{(i)} \otimes \beta^{(j)} \otimes \gamma^{(j)}\\& + (\alpha^{(j)} - \alpha^{(i)}) \otimes \beta^{(j)} \otimes \gamma^{(j)}\\
        &\rightarrow \alpha^{(i)} \otimes (\beta^{(i)} + \beta^{(j)}) \otimes \gamma^{(i)} + \alpha^{(i)} \otimes \beta^{(j)} \otimes (\gamma^{(j)} - \gamma^{(i)})\\& + (\alpha^{(j)} - \alpha^{(i)}) \otimes \beta^{(j)} \otimes \gamma^{(j)}.
    \end{align*}
The scheme $S'$ obtained by this transformation is referred to as the \textit{Plus} of $S$.
This plus transformation also applies to any permutation of $\alpha$, $\beta$ and $\gamma$.
\end{definition}

The pseudo-code is given in Algorithm~\ref{alg2}. This algorithm, in addition to the random search from Algorithm \ref{alg:flip-graph-algorithm}, transitions from the scheme $S$ to its plus of $S$ if the rank does not decrease after a certain number of updates.

\begin{algorithm}[]
    \caption{Plus Transition}
    \label{alg2}
    \begin{algorithmic}[1]
    \State \textbf{Input:} Initial scheme $S_{0}$, total iteration $T$, and plus flag $L$
    \State \textbf{Output:} $S_{T}$
    \State $l \gets 0$
    \For{$t = 1$ to $T$}
        \State $S_{t}\gets$ \Call{random\_search}{$S_{t-1}$} from Algorithm \ref{alg:random-search}
        \If{$S_{t}$ is reduction of $S_{t-1}$}
            \State $l \gets 0$
        \Else
            \State $l \gets l+1$
        \EndIf
        \If{$l > L$}
            \State $S_{t} \gets$ Plus $S_{t}$ by Definition~\ref{def:plus}
            \State $l \gets 0$
        \EndIf
    \EndFor
    \end{algorithmic}
\end{algorithm}
\subsection{Edge Constraints}
To address the latter issue, we propose an alternative approach called edge constraints. This approach strategically reduces the size of the flip graph by restricting the edges to be transitioned in the flip graph. Generally, the initial scheme of a tensor $\mathcal{M}_{n,m,p}$ partially contains the initial scheme of the tensor $\mathcal{M}_{n,m,p-1}$. Thus, flip graphs consisting only of portions of initial schemes with small tensors such as $\mathcal{M}_{2,2,2}$ are explored first, and the search graph is gradually expanded by relaxing the constraints.  The pseudo-code for this algorithm is given in Algorithm~\ref{alg3}.

\begin{algorithm}[H]
    \caption{Edge Constraints}
    \label{alg3}
    \begin{algorithmic}[1]
    \State \textbf{Input:} Initial $(n,m,p)$-matrix multiplication scheme $S_{0}$, and total iterations $\mathcal{T}=\{T_{i}\}$
    \State \textbf{Output:} $S_{T}$
    \State Initialize $(n',m',p')\gets(2,2,2)$  
    \For{$i = 1$ to $n + m + p - 5$}
        \For{$t = 1$ to $T_{i}$}
            \State $S_{t}\gets$ \Call{random\_search}{$S_{t-1}$} in Algorithm \ref{alg:random-search}
        \EndFor
        \State $\displaystyle \min_{n^{\prime} < n, m^{\prime} < m, p^{\prime} < p} (n^{\prime}, m^{\prime}, p^{\prime}) \gets \min_{n^{\prime} < n, m^{\prime} < m, p^{\prime} < p} (n^{\prime}, m^{\prime}, p^{\prime})+1$
    \EndFor
    \end{algorithmic}
\end{algorithm}

\section{Connectivity of flip graphs through plus transitions}
In this section, we demonstrate that introducing plus transitions ensures the connectivity of the flip graph for $K = \mathbb{Z}2$, facilitating the exploitation of any matrix multiplication schemes.
\begin{lemma}
    \label{lemma:rank}
    In the $(n,m,p)$-matrix multiplication scheme S, it holds that $rank([\alpha^{(1)}, \alpha^{(2)}, ..., \alpha^{(R)}]) = n*m$. Similarly, for $\beta$ and $\gamma$, $rank([\beta^{(1)}, \beta^{(2)}, ..., \beta^{(R)}]) = m*p$ and $rank([\gamma^{(1)}, \gamma^{(2)}, ..., \gamma^{(R)}]) = p*n$.
\end{lemma}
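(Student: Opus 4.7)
The plan is to prove the first equality $\mathrm{rank}([\alpha^{(1)}, \alpha^{(2)}, \ldots, \alpha^{(R)}]) = nm$ by a standard duality argument on tensors, and then to obtain the statements for $\beta$ and $\gamma$ from the cyclic symmetry of $\mathcal{M}_{n,m,p}$. Since each $\alpha^{(r)}$ lies in the $nm$-dimensional space $K^{n\times m}$, it suffices to show that the $\alpha^{(r)}$ span this entire space, so I would argue by contraposition: assume there exists a nonzero linear functional $\phi \in (K^{n\times m})^{*}$ with $\phi(\alpha^{(r)}) = 0$ for every $r$, and derive a contradiction.

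The central step is to apply the partial contraction $\phi \otimes \mathrm{id} \otimes \mathrm{id}$ to both sides of the identity $\mathcal{M}_{n,m,p} = \sum_{r=1}^{R} \alpha^{(r)} \otimes \beta^{(r)} \otimes \gamma^{(r)}$. On the right the sum collapses to $0$ by the choice of $\phi$, while on the left, using the defining expression of $\mathcal{M}_{n,m,p}$, it evaluates to $\sum_{i,j,k} \phi(a_{i,j})\, b_{j,k} \otimes c_{k,i}$ in $K^{m\times p} \otimes K^{p\times n}$. I would then observe that the family $\{b_{j,k} \otimes c_{l,i}\}_{i,j,k,l}$ is a basis of this tensor product space, so the vanishing of the sum forces each coefficient $\phi(a_{i,j})$ to vanish. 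Since $\{a_{i,j}\}_{i,j}$ is itself a basis of $K^{n\times m}$, this yields $\phi = 0$, which contradicts the choice of $\phi$.

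The analogous statements for $\beta$ and $\gamma$ follow either by performing the same contraction argument on the second or third tensor factor, or equivalently by invoking the cyclic symmetry of the matrix multiplication tensor under $(n,m,p)\mapsto(m,p,n)$, which cyclically permutes $\alpha$, $\beta$, and $\gamma$ at the level of schemes.

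I do not anticipate any deep obstacle: the entire argument is an exercise in multilinear algebra, independent of the characteristic of $K$. The one point requiring mild care is the coefficient extraction in $K^{m\times p} \otimes K^{p\times n}$, which rests on the linear independence of $\{b_{j,k} \otimes c_{l,i}\}_{i,j,k,l}$; this reduces to the standard fact that the tensor product of two standard bases is itself a basis of the tensor product space.
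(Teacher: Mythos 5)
Your argument is correct and is essentially the dual formulation of the paper's proof: the paper contracts the tensor identity against coordinate functionals on the $\beta$ and $\gamma$ factors to exhibit each basis element $a_{i,k}$ explicitly as a linear combination $\sum_{r}\beta^{(r)}_{k,j}\gamma^{(r)}_{j,i}\alpha^{(r)}$ of the $\alpha^{(r)}$, whereas you contract against a hypothetical annihilating functional on the $\alpha$ factor and use linear independence of the $b_{j,k}\otimes c_{k,i}$ to force it to vanish. Both rest on the same partial-contraction computation, so the approaches coincide in substance; your version is complete, and the point you flag about injectivity of $(i,j,k)\mapsto(b_{j,k}\otimes c_{k,i})$ is indeed the only place requiring care.
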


\begin{proof}
    We show that $\text{rank}([\alpha^{(1)}, \alpha^{(2)}, \ldots, \alpha^{(R)}]) = n \times m$. The proof for $\beta$ and $\gamma$ follows a similar argument.\\
    By the definition of matrix multiplication,
    \begin{equation}
        \label{eq:definition of matrix multiplication}
        C_{i, j} = \sum_{k = 1}^m A_{i, k}B_{k, j}
    \end{equation}
    Considering the definition of the matrix multiplication tensor from Equation~\ref{eq:tensorform},
    \begin{align}
        C_{i, j} &= \sum_{r = 1}^R \gamma_{j, i}^{(r)} \left(\sum_{i_a, j_a = 1}^{n, m} \alpha_{i_a, j_a}^{(r)}A_{i_a, j_a}\right) \left(\sum_{i_b, j_b = 1}^{m, p} \beta_{i_b, j_b}^{(r)}B_{i_b, j_b}\right)\\
        \intertext{From Equation~\ref{eq:definition of matrix multiplication}, noting that the sum of all terms excluding $B_{k, j}$ is zero, we have}
        &= \sum_{r = 1}^R \gamma_{j, i}^{(r)} \left(\sum_{i_a, j_a = 1}^{n, m} \alpha_{i_a, j_a}^{(r)}A_{i_a, j_a}\right) \left(\sum_{k = 1}^{m} \beta_{k, j}^{(r)}B_{k, j}\right)\\
        &= \sum_{k = 1}^{m}\left(\sum_{r = 1}^R \beta_{k, j}^{(r)} \gamma_{j, i}^{(r)} \left(\sum_{i_a, j_a = 1}^{n, m} \alpha_{i_a, j_a}^{(r)}A_{i_a, j_a}\right)\right)B_{k, j}.
    \end{align}
    Comparing this with Equation~\ref{eq:definition of matrix multiplication}, we find
    \begin{align}
        A_{i, k} &= \sum_{r = 1}^R \beta_{k, j}^{(r)} \gamma_{j, i}^{(r)} \left(\sum_{i_a, j_a = 1}^{n, m} \alpha_{i_a, j_a}^{(r)}A_{i_a, j_a}\right),\\
        a_{i, k} &= \sum_{r = 1}^R \beta_{k, j}^{(r)} \gamma_{j, i}^{(r)} \alpha^{(r)}.
    \end{align}
    
    Thus, for any $i, k$, $a_{i, k}$ can be expressed as a linear combination of the vectors $\alpha^{(r)}$. This implies that the matrix constructed by arranging the vectors $\alpha$ for scheme S is of full rank, concluding that $\text{rank}([\alpha^{(1)}, \alpha^{(2)}, \ldots, \alpha^{(R)}]) = n \times m$.
\end{proof}

Regarding the scheme transformation by the plus operation as defined in Definition~\ref{def:plus}, the first operation 
\[\alpha \otimes \beta \otimes \gamma \rightarrow \alpha' \otimes \beta \otimes \gamma + (\alpha - \alpha') \otimes \beta \otimes \gamma\]
is interpreted as splitting the rank-1 tensor $\alpha \otimes \beta \otimes \gamma$ by $\alpha'$. In this transition, $\alpha'$ is limited to those $\alpha$ existing within the scheme $S$. 
Under this limitation, we prove that the flip graph is connected on $\mathbb{Z}2$ as follows.
\begin{theorem}
    The $(n, m, p)$-flip graph becomes connected over $\mathbb{Z}_2$ through the addition of edges by plus transformation.
\end{theorem}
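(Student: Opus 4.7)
The plan is to show that every scheme $S$ is connected to the standard classical scheme $S_{\text{std}}$ of Example~\ref{ex:standard} in the augmented graph, which then serves as a global hub. The proof over $\mathbb{Z}_2$ hinges on Lemma~\ref{lemma:rank}: since the vectors $\{\alpha^{(r)}\}$ of any scheme span $\mathbb{Z}_2^{nm}$, every elementary matrix $a_{i,j}$ can be written as a $\{0,1\}$-combination $\sum_{r\in I}\alpha^{(r)}$, and analogous statements hold for $b_{j,k}$ and $c_{k,i}$ by the symmetric versions of the lemma.

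The key lever is that a plus move replaces two scheme entries with $\alpha$-factors $\alpha^{(i)},\alpha^{(j)}$ by three entries whose $\alpha$-factors are $\alpha^{(i)},\alpha^{(i)},\alpha^{(i)}+\alpha^{(j)}$, and similarly for $\beta$ or $\gamma$ by the permutation clause of Definition~\ref{def:plus}. Iterating these $\mathbb{Z}_2$ row-operation-like moves, interleaved with flips that propagate the changes into the other two slots, one can arrange that every elementary $a_{i,j}$, $b_{j,k}$, and $c_{k,i}$ appears as the corresponding factor of some rank-one summand; further flips then assemble each standard triple $a_{i,j}\otimes b_{j,k}\otimes c_{k,i}$ as an explicit summand of the scheme. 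Since the total tensor equals $\mathcal{M}_{n,m,p}=\sum_{i,j,k}a_{i,j}\otimes b_{j,k}\otimes c_{k,i}$, the remaining leftover summands have formal $\mathbb{Z}_2$-sum zero; each pair of identical leftovers satisfies the conditions of Definition~\ref{def:redcution} (one-dimensional $\alpha$-span and vanishing $\beta\otimes\gamma$-sum, so $|I|=2$ terms collapse to zero), and is eliminated by a reduction, with an inductive step handling more intricate $\mathbb{Z}_2$-dependencies via flips that first align the relevant slot-vectors.

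The main obstacle will be the constructive bookkeeping in the first phase, together with the leftover-elimination in the last. Because plus only splits along $\alpha$-vectors already present, realizing each target elementary $a_{i,j}$ as an actual scheme entry requires an induction on the generated spanning set, with a careful ordering so that later pluses do not undo the factors installed by earlier ones, and with an accounting of the auxiliary mass produced in the $\beta$- and $\gamma$-slots. The leftover-elimination step then demands a parallel induction showing that any multiset of rank-one tensors summing to zero over $\mathbb{Z}_2$ collapses to nothing under flips and reductions---the crux being that a nonzero $\mathbb{Z}_2$-relation among rank-one tensors always forces a shared slot-vector that can be brought into coincidence by a flip, exposing a reducible subset.
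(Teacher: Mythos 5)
Your proposal follows essentially the same route as the paper's own proof: both rest on Lemma~\ref{lemma:rank} to write any target slot-vector as a $\mathbb{Z}_2$-combination of slot-vectors already present, use plus moves as row-operation-like splittings to install the targets, and finish by eliminating a zero-sum residue with reductions; routing every scheme through the standard scheme as a hub, rather than transforming $S$ directly into $S'$ as the paper does, is an immaterial difference for connectivity. The one point where the paper is more explicit than your sketch is the residue elimination: it first forces every leftover factor in all three slots to be an elementary matrix, so that the residue is a multiset of basis tensors $a_{i,j}\otimes b_{k,l}\otimes c_{m,n}$ whose vanishing sum over $\mathbb{Z}_2$ forces each distinct one to occur an even number of times, whence identical pairs are removed by Definition~\ref{def:redcution} (with $|I|=2$ and $\sum_{i\in I}\beta^{(i)}\otimes\gamma^{(i)}=0$) and the ``more intricate dependencies'' you worry about never arise.
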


\begin{proof}
    For any two schemes $S$ and $S'$,  We will show that there is a path from $S$ to $S'$. If $|S| < |S'|$, $S$ is transformed by applying number of $|S'| - |S|$ arbitrary plus transformations. Let $S_1$ be the resulting scheme. If $|S| \geq |S'|$, let $S_1 = S$.
    Next, the elements of $S_1$ and $S'$ are ordered, and for any given $k$, the $k$-th element of $S_1$ is transformed into the $k$-th element of $S'$. The elements added by the plus transformation are added after the $k$-th element. According to Lemma~\ref{lemma:rank}, the $\alpha$ component of the $k$-th element of $S'$ can be expressed as a linear combination of $\alpha$ in $S_1$. Therefore, by applying the plus transformation repeatedly, the $\alpha$ component of the $k$-th element in $S_1$ can be transformed into that of $S'$. This process is similarly applied to the $\beta$ and $\gamma$ components, ensuring that the $k$-th element of $S_1$ matches that of $S'$. Let $S_2$ denote this scheme.
    
    For the elements beyond the $|S'|$-th position in $S_2$, the following transformations are performed. According to Lemma~\ref{lemma:rank}, for any $i,j$, it is possible to represent $a_{i,j}$ as a linear combination of $\alpha$ in $S_2$. Therefore, by repeatedly applying plus transformations, a scheme $S_3$ can be constructed such that for any $i,j$, $a_{i,j} \in \{\alpha^{(r)}\}$.
    Next, using only those $\alpha^{(r)}$ that are in $\{a_{i,j}\}$ for splitting , plus transformations are performed to obtain a scheme $S_4$, where $\{\alpha^{(r)}|r > |S'|\} \subset \{a_{i,j}\}$.
    Applying the same operations from $S_2$ to $S_4$ to $\beta$ and $\gamma$, the scheme $S_5$ is obtained.
    
    Finally, in scheme $S_5$, for elements beyond the $|S'|$-th position, the reductions are repeatedly applied to any pairs of rank-1 tensors. The elements beyond the $|S'|$-th position in $S_5$ are expressed as $a_{i,j} \otimes b_{k,l} \otimes c_{m,n}$, and their sum is zero. Hence, this operation eliminates all such elements, resulting in $S_6 = S'$.
\end{proof}

In this proof, it is proven that the scheme $S$ can be transformed into $S'$ through a finite number of transformations. This leads to the following corollary, which appears almost trivial but proves the validity of the basic strategy of the adaptive flip-graph algorithm:  
\begin{corollary}
    A scheme with the minimal rank can be reached from the initial scheme by a finite number of transitions on the flip graph with edges of plus transformations. 
\end{corollary}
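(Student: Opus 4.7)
The plan is to obtain the corollary as an almost immediate consequence of the preceding theorem. Let $S_0$ denote the given initial scheme and let $r^{*}$ be the minimum rank over all $(n,m,p)$-matrix multiplication schemes; since the set of realizable ranks is a nonempty subset of the positive integers, the minimum is attained by some scheme $S^{*}$. The task then reduces to exhibiting a \emph{finite} path from $S_0$ to $S^{*}$ in the flip graph augmented with plus edges.

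I would then apply the theorem to the pair $(S_0, S^{*})$: it asserts the existence of a sequence of plus and flip/reduction transitions connecting $S_0$ to $S^{*}$. To upgrade this from mere connectivity to finite reachability, I would inspect each phase of the path produced in the proof of the theorem and verify that only finitely many elementary moves are used. Concretely, the rank-equalizing prelude $S_0 \to S_1$ contributes $\bigl| |S^{*}|-|S_0| \bigr|$ plus transformations; the $|S^{*}|$ element-matching rewrites each use finitely many plus transformations, since by Lemma~\ref{lemma:rank} every $\alpha$-, $\beta$-, and $\gamma$-component of $S^{*}$ is a $\mathbb{Z}_2$-linear combination of finitely many vectors already present in the scheme being rewritten; the canonicalisation of the surplus tensors into the form $a_{i,j}\otimes b_{k,l}\otimes c_{m,n}$ again uses finitely many plus transformations; and the annihilation of the sum-zero tail is a finite sequence of reductions.

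The principal obstacle is little more than bookkeeping: one must check that every appeal to a ``linear combination'' argument in the theorem's proof unpacks into a finite list of plus moves. This is immediate because the ambient spaces $K^{n\times m}$, $K^{m\times p}$, $K^{p\times n}$ are finite-dimensional over $\mathbb{Z}_2$ and the schemes involved are finite multisets, so each phase of the construction contributes only finitely many edges. Summing these finite contributions yields a path of finite length from $S_0$ to $S^{*}$, and hence the minimal-rank scheme is reached from the initial scheme in finitely many transitions, as claimed.
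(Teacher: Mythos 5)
Your proposal is correct and takes essentially the same route as the paper: the corollary is obtained by applying the connectivity theorem to the pair consisting of the initial scheme and a minimal-rank scheme, observing that the constructive proof of the theorem uses only finitely many plus, flip, and reduction moves. The paper treats this as immediate from the theorem's proof, and your finiteness bookkeeping merely makes that observation explicit.
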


\section{Results}
This section aims to evaluate the effects of the plus transitions and edge constraints, respectively and presents the results of an adaptive flip graph algorithm that combines these methods for various matrix multiplications. 
Sec.~\ref{subsec:plus-experiment} demonstrates that the plus transition enables an efficient escape from non-reduction states and facilitates the search for better schemes.
Sec.~\ref{subsec:edge-constraint-experiment} shows that the edge constraints accelerate the exploration, in particular for escaping from the initial scheme. 
Finally, Sec.~\ref{subsec:results-adaptive-algorithm} presents the results of adaptive flip graph algorithms for various matrix multiplications.
In the following numerical experiments, we restricted matrix multiplication schemes to ground fields of characteristic two, $K = \mathbb{Z}_2$.


\subsection{Effectiveness of Plus Transition Escaping Non-Reduction States}\label{subsec:plus-experiment}
We demonstrate that the introduction of plus transitions indeed solves the issue of non-reduction states. 
For comparison, we applied the flip graph algorithm both with and without the plus transitions to the $(4,4,4)$-matrix multiplication scheme, using the standard classical matrix multiplication algorithm as the initial scheme. In this experiment, the total number of iterations $T$ was set to $10^{8}$, and the plus flag $L$ in Algorithm~\ref{alg2} was set to $5\times 10^{3}$. 
Figure \ref{fig:Plus_4} illustrates the dependence of the rank of the scheme on the number of iterations in the flip graph algorithm.
As shown in the figure, both algorithms begin to decrease in rank from the initial scheme after about $10^4$ iterations, and there is little difference up to $10^6$ iterations. Beyond that point, however, there is a significant difference. Without the plus transition, the rate of decrease in rank slows down, while with the plus transition, it continues to decrease steadily.  
These results suggest that employing the plus transition can effectively escape from non-reduction states.

\begin{figure}[]
    \centering
    \includegraphics[width=\linewidth, bb = 0 0 360 252]{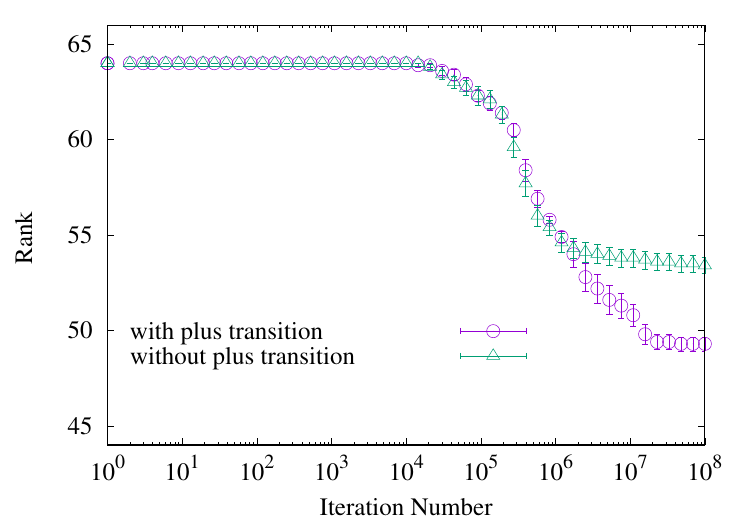}
    \caption{Dependence of the minimum rank found in the search on the number of iterations in the $(4,4,4)$-matrix multiplication scheme. Results for the flip graph algorithm with the plus transition are marked with circles and those without the plus transition are marked with triangles. Error bars represent the standard error obtained by 10 independent runs. }
    \label{fig:Plus_4}
\end{figure}

\subsection{Acceleration by Edge Constraints in Large Matrix Multiplications}\label{subsec:edge-constraint-experiment}

In this subsection, we address the issue of practical non-reducibility for large matrix multiplications, focusing in particular on the $(5,5,5)$-matrix multiplication scheme. We demonstrate that this issue is significantly diminished by the edge constraints introduced above. 
To conduct our comparative analysis, we applied the flip graph algorithm both with and without the edge constraints to the $(5,5,5)$-matrix multiplication scheme, using the standard classical matrix multiplication algorithm as the initial scheme. 
In this experiment, the total number of iterations was set to $\mathcal{T}=\{T_{i}\}$ with $\sum_{i} T_{i} = 10^8$ in Algorithm~\ref{alg3}, and the split points were equally spaced on a logarithmic scale. 
As shown in Figure \ref{fig:Edge_constraints}, when employing only the flip graph algorithm, even $10^8$ iterations failed to find schemes with ranks lower than 125, indicating the seriousness of the issue of practical non-reducibility. However, the introduction of the edge constraints facilitated the identification of reducible vertices in about $10^2$ iterations and began to decrease the rank. 
This result shows that edge constraints are an effective method for addressing the issue of practical non-reducibility.

\begin{figure}[]
    \centering
    \includegraphics[width=\linewidth, bb = 0 0 360 252]{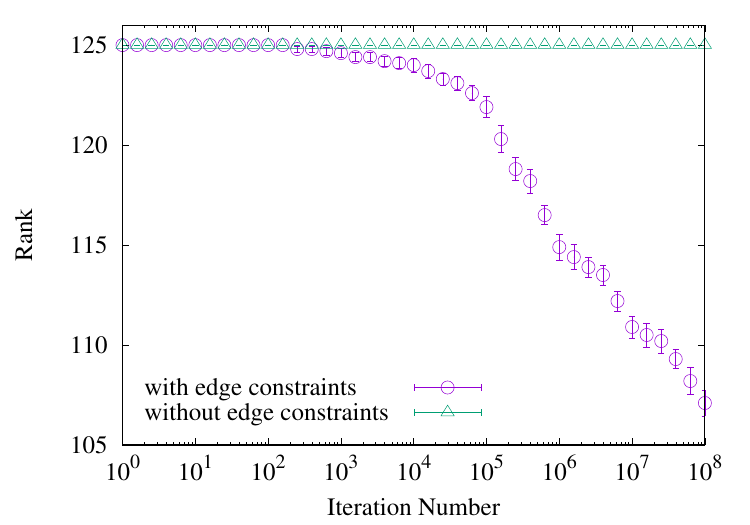}
    \caption{Dependence of the minimum rank found in the search on the number of iterations in the $(5,5,5)$-matrix multiplication scheme. Results for the flip graph algorithm with the edge constraints are marked with circles and those without the edge constraints are marked with triangles. Error bars represent the standard error obtained by 10 independent runs. }
    \label{fig:Edge_constraints}
\end{figure}

\subsection{Results of Adaptive Flip Graph Algorithm for Various Matrix Multiplications}\label{subsec:results-adaptive-algorithm}
We have confirmed that both plus transitions and edge constraints improve the efficiency of the flip graph algorithm from different perspectives. The adaptive flip graph algorithm, incorporating these two ideas, was applied to the $(n,m,p)$-matrix multiplications in characteristic 2. 
For the results of the adaptive flip graph algorithm, the plus flag was set to $L = 5\times 10^{3}$ for all edge constraints. In the case of the maximal case of the $(5,5,5)$-matrix multiplication, the total number of iterations of edge constraints was set as $\mathcal{T}=\{5\times 10^7$, $5\times 10^7$, $5\times 10^7$, $5\times 10^7$, $10^8$, $1.5\times 10^8$, $2.5\times 10^8$, $7.5\times 10^8$, $1.5\times 10^9$, $3\times 10^9\}$ and the edge constraints were repeated $10^2$ times, starting from the previous edge constraints scheme with lowest scheme rank. 
As shown in Table \ref{table:result-matrix}, the adaptive flip graph algorithm achieved updated minimum ranks for the $(4,5,5)$ and $(5,5,5)$-matrix multiplication schemes, while reproducing the previously known minimum ranks for other smaller matrix multiplication schemes. 
The schemes we found for $(4,5,5)$ and $(5,5,5)$-matrix multiplications are explicitly shown in \url{https://github.com/Yamato-Arai/adap}.

\begin{table}[]
    \caption{Comparison of the best-known rank and the rank found by the adaptive flip graph algorithm in characteristic 2 for various matrix multiplications, denoted as ``Adap. Flip''. 
    The table includes results obtained by the Alpha Tensor-based reinforcement learning \cite{Fawzi2022} denoted as ``AT'', those obtained by the flip graph algorithm starting from the standard classical matrix multiplication algorithms \cite{kauers2022flip} denoted as ``Flip'' and those obtained by the flip graph algorithm starting from the ``AT'' results examined only in the $(5,5,5)$-matrix multiplication denoted as ``FwAT'' \cite{kauers2022flip}. 
      }
  \centering
  \begin{tabular}{c|lc|ccc|c}
    \toprule
    $(n,m,p)$ & Best Method &Best & AT & Flip & FwAT & Adap. Flip\\
    \midrule
    $(2,2,2)$ & \cite{strassen1969gaussian} & 7  & 7 & 7 & -- & 7 \\
    $(2,2,3)$ & \cite{hopcroft1971minimizing} & 11 & 11 & 11 & -- & 11 \\
    $(2,2,4)$ & \cite{hopcroft1971minimizing} & 14 & 14 & 14 & -- & 14 \\
    $(2,3,3)$ & \cite{hopcroft1971minimizing} & 15 & 15 & 15 & -- & 15 \\
    $(2,2,5)$ & \cite{hopcroft1971minimizing} & 18 & 18 & 18 & -- & 18 \\
    $(2,3,4)$ & \cite{hopcroft1971minimizing} & 20 & 20 & 20 & -- & 20\\
    $(3,3,3)$ & \cite{laderman1976noncommutative} & 23 & 23 & 23 & -- & 23\\
    $(2,3,5)$ & \cite{hopcroft1971minimizing} & 25 & 25 & 25 & -- & 25\\
    $(2,4,4)$ & \cite{hopcroft1971minimizing} & 26 & 26 & 26 & -- & 26\\
    $(3,3,4)$ & \cite{smirnov2013bilinear} & 29 & 29 & 29 & -- & 29\\
    $(2,4,5)$ & \cite{hopcroft1971minimizing} & 33 & 33 & 33 & -- & 33\\
    $(3,3,5)$ & \cite{smirnov2013bilinear} & 36 & 36 & 36 & -- & 36\\
    $(3,4,4)$ & \cite{smirnov2013bilinear} & 38 & 38 & 38 & -- & 38\\
    $(2,5,5)$ & \cite{hopcroft1971minimizing} & 40 & 40 & 40 & -- & 40\\
    $(3,4,5)$ & \cite{Fawzi2022} & 47 & 47 & 47 & -- & 47\\
    $(4,4,4)$ & \cite{Fawzi2022} & 47 & 47 & 47 & -- & 47\\
    $(3,5,5)$ & \cite{sedoglavic2021tensor} & 58 & 58 & 58 & -- & 58\\
    $(4,4,5)$ & \cite{kauers2022flip} & 60 & 63 & 60 & -- & 60\\
    $(4,5,5)$ & \cite{Fawzi2022} & 76 & 76 & 76 & -- & $\bm{73}$ \\
    $(5,5,5)$ & \cite{kauers2022flip} & 95 & 96 & 97 & 95 & $\bm{94}$ \\
    \bottomrule
  \end{tabular}
    \label{table:result-matrix}
\end{table}


\section{Discussion and Summary}
\paragraph{Evaluation and Implication of Results}
We compare our study with previous studies from the perspective of computational resources.
All numerical experiments in this study were performed on an Intel(R) Core(TM) i7-8650U CPU \@1.90GHz 2.11 GHz, single CPU thread.
While the previous reinforcement learning study using the Alpha Tensor \cite{Fawzi2022} employed a 64-core TPU and the original flip graph algorithm \cite{kauers2022flip} required massive parallel computational resources, our method demonstrated high efficiency with only a single CPU thread.

Moreover, we contrast our approach with the method presented in the original flip graph algorithm \cite{kauers2022flip} with respect to the initial schemes. The previous study found the scheme with rank 95 for the $(5,5,5)$-matrix multiplication, using the Alpha Tensor results as the initial scheme. In contrast, when using the standard classical matrix multiplication algorithm as the initial scheme, the best scheme found was the scheme of rank 97. It should be emphasized that our method successfully found the scheme of rank 94 using the standard classical matrix multiplication algorithm as the initial scheme.
\paragraph{Future Work and Conclusion}
This study identified and addressed issues in a random search on the flip graph by combining the plus transitions and edge constraints, resulting in significant improvements in the $(4,5,5)$ and $(5,5,5)$-matrix multiplications. Our future objective is to develop strategies for handling matrices larger than sixth order, using both existing and innovative methods, with the ultimate objective of finding a matrix multiplication algorithm that substantially outperforms Strassen's algorithm.

\begin{acks}
We would like to gratefully acknowledge Jun Takahashi and Yasushi Nagano for their valuable insights and discussions at the outset of this study. 
This work was supported by JST Grant Number JPMJPF2221 and JSPS KAKENHI Grant Number 23H01095.
\end{acks}

\printbibliography





\end{document}